\renewenvironment{proof}[1][\proofname]{\noindent {\bfseries #1.}  }{\qed}
\newtheorem{theorem}{Theorem}[section]
\newtheorem{lemma}[theorem]{Lemma}
\newtheorem{proposition}[theorem]{Proposition}
\newtheorem{definition}[theorem]{Definition}
\renewcommand{\mathbf}{\boldsymbol}
\newcommand{\mb}{\mathbf}
\newcommand{\mc}{\mathcal}
\newcommand{\bb}{\mathbb}
\newcommand{\set}[1]{\left\{ #1 \right\}}
\newcommand{\reals}{\bb R}
\newcommand{\eps}{\varepsilon}
\newcommand{\R}{\reals}
\newcommand{\Z}{\bb Z}
\newcommand{\N}{\bb N}
\newcommand{ \brac }[1]{\left[ #1 \right]}
\newcommand{ \Brac }[1]{\left\lbrace #1 \right\rbrace}
\newcommand{ \paren }[1]{ \left( #1 \right) }
\DeclareMathOperator{\diag}{diag}
\DeclareMathOperator{\vect}{vec}
\newcommand{\wh}{\widehat}
\newcommand{\wt}{\widetilde}
\newcommand{\ol}{\overline}
\newcommand{\norm}[2]{\left\| #1 \right\|_{#2}}
\newcommand{\abs}[1]{\left| #1 \right|}
\newcommand{\innerprod}[2]{\left\langle #1,  #2 \right\rangle}
\numberwithin{equation}{section}
\title{Dual-Reference Design for \\
Holographic Coherent Diffraction Imaging\thanks{A preliminary version is published in International Conference on Sampling and Applications, 2019~\cite{HCDI_SampTA}. }}
\author{David A. Barmherzig\thanks{Institute for Computational and Mathematical Engineering, Stanford University, Stanford, CA 94305, U.S.A.}
        \and Ju Sun\thanks{Department of Mathematics, Stanford University, Stanford, CA 94305, U.S.A.}
        \and Po-Nan Li\thanks{Department of Electrical Engineering, Stanford University, Stanford, CA 94305, U.S.A.}
                \and T.J. Lane\thanks{SLAC National Accelerator Laboratory, Menlo Park, CA 94025, U.S.A.}
        \and Emmanuel J. Cand\`{e}s\thanks{Department of Mathematics and Department of Statistics, Stanford University, Stanford, CA 94305, U.S.A.}
}
\date{}
\date{  \quad Revised: \today}
\date{\today}
\begin{document}
\maketitle

\begin{abstract}
A new reference design is introduced for holographic coherent diffraction imaging. This consists in two references\textemdash ``block'' and ``pinhole'' shaped regions\textemdash placed adjacent to the imaging specimen. An efficient recovery algorithm is provided for the resulting holographic phase retrieval problem, which is based on solving a structured, overdetermined linear system. Analysis of the expected recovery error on noisy data, which is contaminated by Poisson shot noise, shows that this simple modification synergizes the individual references and hence leads to uniformly superior performance over single-reference schemes. Numerical experiments on simulated data confirm the theoretical prediction, and the proposed dual-reference scheme achieves a smaller recovery error than leading single-reference schemes.
\end{abstract}

\section{Introduction}

\subsection{Holographic CDI and holographic phase retrieval}

Coherent Diffraction Imaging (CDI) is a scientific imaging technique used for resolving nanoscale scientific specimens, such as macroviruses, proteins, and crystals~\cite{CDI-orig}. In CDI, a coherent radiation source (often an X-ray beam) is incident on a specimen and gets diffracted. The resulting photon flux is then measured at a far-field detector, and the measured data are approximately proportional to the squared magnitudes of the Fourier transform of the wave field within the diffraction area. One can then determine the specimen's structure by solving the \textit{phase retrieval} problem, the mathematical inverse problem of recovering a signal from its squared Fourier magnitudes.
The physical phase retrieval problem can be stated symbolically as
\begin{align}  \label{eq:pr_symbol}
\begin{split}
&\textbf{Given} \quad \big{|}\wh{X}(\omega)\big{|}^2 \doteq \abs{\int_{t \in T} X(t)e^{-i \innerprod{\omega}{t}}}^2\quad \text{for}\; \omega \in \Omega \\
&\textbf{Recover} \quad X
\end{split},
\end{align}
where $T$ and $\Omega$ are the (possibly multidimensional) domains of the specimen and its Fourier transform, respectively. Since the detectors used in practical CDI have only a finite number of pixels, and algorithmic phase retrieval is often performed on digital computers, phase retrieval is frequently stated in the discrete form: in \cref{eq:pr_symbol}, $\Omega$ and $T$ are finite-size arrays and the Fourier transform is implemented as the discrete Fourier transform. The discrete formulation is a reasonable proxy for the continuous formulation, as explained in \cref{sec:c2d_app}. We adopt this direct discrete formulation in this paper.

In a variant of CDI known as Holographic CDI, a ``reference'' portion of the diffraction area is a priori known from experimental design (e.g. see \cref{FH-CDI,single-ref}). Typically, the reference portion is simply a geometric shape cut out from the apparatus surrounding the specimen. The resulting inverse problem, in which a portion of the signal to be recovered is already known, is the \emph{holographic phase retrieval} problem.

\begin{figure}[!htbp] \label{FH-CDI}
    \centering
        \includegraphics[width=0.5\textwidth]{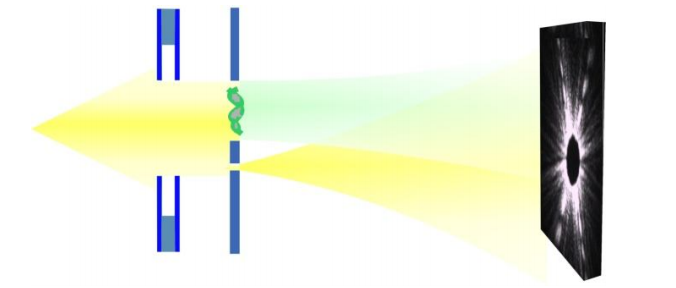}
        \caption{Holographic CDI schematic. The upper portion of the diffraction area contains the imaging specimen of interest, and the lower portion consists of a known ``reference'' shape. Image courtesy of~\cite{FT-Cambridge}.}
    \label{FH-CDI}
\end{figure}

For any reference choice satisfying mild assumptions, solving the holographic phase retrieval problem amounts to solving a structured linear system~\cite{HologPROptREF}. However, different reference choices will lead to different noise stability performances. Particularly, our previous work~\cite{HologPROptREF} revealed, both theoretically and empirically, the relative merits of two popular references: the block reference $R_B$ (see \cref{block-ref}) performs favorably on data with low-frequency dominant spectra, whereas the pinhole reference (see \cref{pinhole-ref}) has an edge for data with flat spectra. It is a natural question if the respectively advantages can be combined, and how.


\begin{figure}[!htbp] \label{single-ref}
    \centering
        \includegraphics[width=0.4\textwidth]{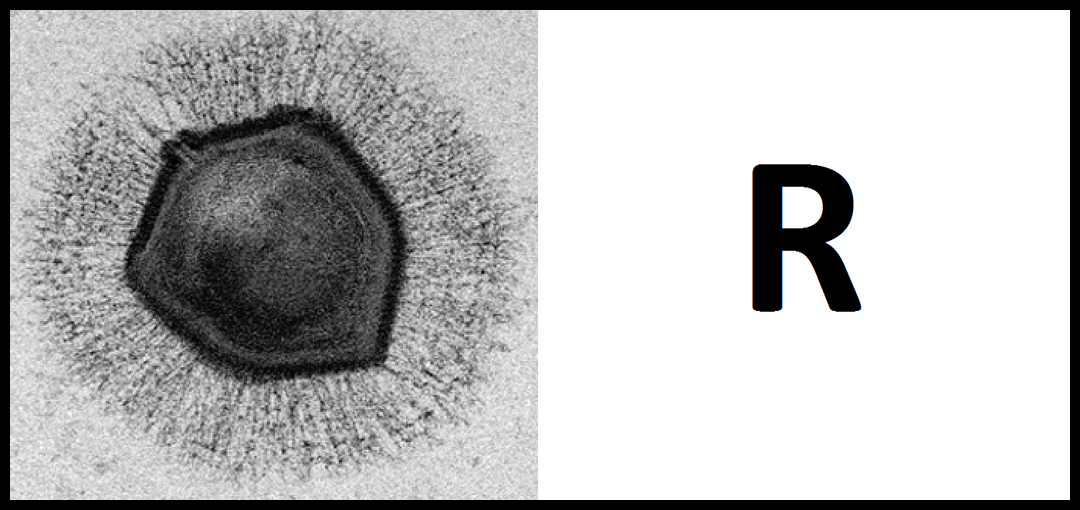}
        \caption{Schematic of the diffraction area in Holographic CDI containing a specimen and a known (single) reference portion. The specimen shown is the Mimivirus, courtesy of \cite{Mimivirus}.}
    \label{single-ref}
\end{figure}

\begin{figure}[!htbp]
\centering
\begin{subfigure}{0.2\textwidth}
        \includegraphics[width=\textwidth]{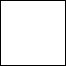}
        \caption{}
          \label{block-ref}
    \end{subfigure}
        \begin{subfigure}{0.2\textwidth}
        \includegraphics[width=\textwidth]{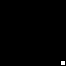}
        \caption{}
        \label{pinhole-ref}
    \end{subfigure}
    \caption{Two popular choices for the reference R shown in \cref{single-ref} are the block reference (\cref{block-ref}) and the pinhole reference (\cref{pinhole-ref}).}
\end{figure}

\subsection{Our contributions}
In this paper, we answer the question in the affirmative and show that a simple augmentation of the block and pinhole references actually works as desired. A recovery algorithm is adapted from the referenced deconvolution algorithm introduced in~\cite{HologPROptREF}.  Both theoretical (\cref{sec:err}) and empirical (\cref{sec:exp}) results confirm the effectiveness of the proposed augmentation scheme. From hindsight, this is still a bit surprising, as the recovery error depends on the reference choice in a complicated manner; see \cref{eqn:linear-exp-fund}.

\subsection{A word on notation}
We mostly use standard mathematical notations, with several special ones highlighted as below. Matlab notations $[A, B]$ and $[A ; B]$ are used to mean horizontal and  vertical concatenation of matrices, respectively. Similarly, notations such as $A\paren{k, :}$ and $A\paren{:, k}$ are used to index rows and columns of matrices, respectively. We use $\otimes$ to denote the matrix Kronecker product.

\section{Dual-reference design and recovery algorithm}

\subsection{Setup and algorithm}
\begin{definition}
The block reference $R_B \in \R^{n \times n}$ and the pinhole reference $R_P \in \R^{n \times n}$ are defined respectively by
\begin{equation} \label{eqn:empty-space-ref}
R_B(t_1,t_2)=1, \quad t_1, t_2 \in \{0,\dots,n-1\},
\end{equation}
and
\begin{equation} \label{eqn:Four-holog-ref}
R_P(t_1,t_2) =
        \begin{cases}
            1, & \quad t_1=t_2 =n-1 \\
            0, & \quad \text{otherwise}
        \end{cases}.
\end{equation}
\end{definition}
The two references are visualized in \cref{block-ref} and \cref{pinhole-ref}, respectively. Suppose that $X \in \mathbb{C}^{n \times n}$ is an ``unknown'' specimen. Consider $\mathcal{X} \in \mathbb{C}^{2n \times 2n}$ given by:
\begin{equation} \label{eqn:dual-ref-setup}
\mathcal{X}
=
\begin{bmatrix}
    X & R_B \\
    R_P & \mathbf{0}_{n \times n}
\end{bmatrix},
\end{equation}
where $\mb 0_{n \times n}$ is the $n \times n$ all-zero matrix.
We assume that the magnitudes of the entries of $X$ are within the interval $[0,1]$. By this convention, 0 values represent areas where the incoming beam is entirely blocked, and 1 values represent areas where the incoming beam passes through unimpeded---which would be ``empty space''.

\begin{figure}[!htbp] \label{dual-ref}
    \centering
        \includegraphics[width=0.4\textwidth]{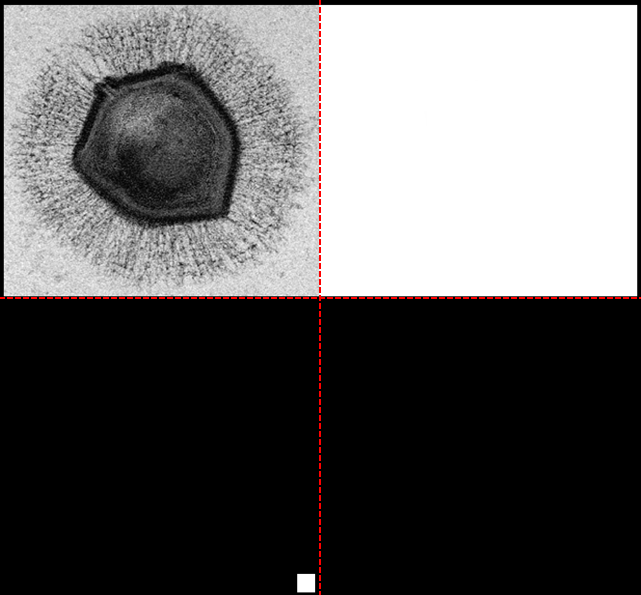}
        \caption{Schematic of the dual-reference. The red dotted line (added for illustration purposes) separates the four quadrants of the setup, as described by ~\cref{eqn:dual-ref-setup}. The specimen shown is the mimivirus, courtesy of \cite{Mimivirus}. }
    \label{dual-ref}
\end{figure}

Suppose that $m \geq 4n-1$ and that $Y=|\widehat{\mc X}|^2 \in \mathbb{C}^{m \times m}$ are the magnitudes of the $m \times m$ oversampled Fourier transform of $\mathcal{X}$\footnote{Here, the absolute value notation on $Y$ is understood in the pointwise sense. Also, we adopt the Matlab convention and assume the zero-frequency component is on the top-left corner of the data matrix. }. We seek to recover $X$ from $\wt Y$, which is a possibly noise-corrupted version of $Y$. We propose a recovery algorithm based on solving a structured linear system, which is effectively the \emph{referenced deconvolution} algorithm introduced in \cite{HologPROptREF} adapted to our current reference scheme.

\begin{enumerate}
\item Given $\wt Y$, apply an inverse Fourier transform ($\bb C^{m \times m} \mapsto \bb C^{(4n-1) \times  (4n-1)}$) to obtain $\wt A_{\mathcal{X}}$, the noisy autocorrelation of $\mathcal{X}$.\footnote{It is well-known that the inverse Fourier transform of the Fourier transform (with sufficient oversampling) magnitude squares of a signal is equal to the signal's autocorrelation~\cite{Oppenheim}.} This can be expressed as $\wt A_{\mathcal{X}}=\frac{1}{m^2}F^*\wt Y (F^*)^T$, where $F \in \mathbb{C}^{m\times (4n-1)}$ is given by $F(k,t)=e^{-2 \pi i k t/m}\; \forall\; (k, t) \in \{0,\dots, m-1\} \times \{-(2n-1), \dots, 2n-1\}$.

\item Let $\mc P_1 = [\mathbf{0}_{n\times n},I_n, \mathbf{0}_{n\times (2n-1)}]$ and $\mc P_2 = [I_n, \mathbf{0}_{n\times (3n-1)}]$. It follows that absent noise, $\mc P_1 \wt A_{\mc X} \mc P_2^\top \in \R^{n \times n}$ (resp., $\mc P_2 \wt A_{\mc X} \mc P_1^\top \in \R^{n \times n}$) is equal to the top-left quadrant of the cross-correlation of $X$ and $R_B$ (resp., $X$ and $R_P$). We thus denote this as $\wt C^{\diamond}_{[X,R_B]}$ (resp., $\wt C^{\diamond}_{[X,R_P]}$).

\item Let $M_{R_B}$ (resp., $M_{R_P}$) $\in \mathbb{R}^{n^2 \times n^2}$ be the matrix satisfying $\vect(C^{\diamond}_{[X,R_B]}) = M_{R_B}\vect(X)$ (resp., $\vect(C^{\diamond}_{[X,R_P]}) = M_{R_P}\vect(X)$).\footnote{The cross-correlations are linear in $X$, and hence such $M_{R_B}$ and $M_{R_P}$ exist for the noiseless cross-correlations $C^\diamond_{[X, R_B]}$ and $C^\diamond_{[X, R_P]}$. } It follows that $M_{R_B}= \mb 1_L \otimes \mb 1_L$, where $\mb 1_L \in \R^{n \times n}$ is the lower-triangular matrix consisting of all ones on and below the main diagonal, and that $M_{R_P}=I_{n^2}$. Let $M = [M_{R_B};  M_{R_P}]$ and $b = [
\vect(\wt C^{\diamond}_{[X,R_B]}) ; \vect(\wt C^{\diamond}_{[X,R_P]})]$. The signal $X$ is estimated as the solution to the least-squares problem
\begin{equation*}
\wt X = \mathop{\arg\min}_{X \in \bb C^{n \times n}} \; \norm{M\vect(X)-b}{}^2.
\end{equation*}
Analytically, this is given by
\begin{equation*}
\vect(\wt X)=M^\dagger b = (M^TM)^{-1}M^Tb.
\end{equation*}
\end{enumerate}
Combining these steps and the well-known identity that $A = BCD \Longleftrightarrow \vect(A)=(D^T \otimes B)\vect(C)$, we have
\begin{equation} \label{eqn:fund}
\vect(\wt X)=T_{R_{B,P}}\vect(\wt Y),
\end{equation}
where
\begin{equation} \label{eqn:T-shortform}
T_{R_{B,P}}=\frac{1}{m^2}M^\dagger  \begin{bmatrix} \mc P_2 F^* \otimes \mc P_1 F^* \\ \mc P_1 F^* \otimes \mc P_2 F^* \end{bmatrix}.
\end{equation}
Note that $\wt X$ is linear in $\wt Y$ and the proposed algorithm recovers $X$ exactly when there is no noise.

\subsection{Computation}
Recall that $M \in \R^{2n^2 \times n^2}$. Thus, taking the inverse Fourier transform and solving the above linear system would cost $O(n^6 + m^2 \log m)$. Below, we show that for our specific $M = [\mb 1_L \otimes \mb 1_L; I_{n^2}]$, the cost can be significantly lower when the structure in $M$ is exploited. We need the following result to proceed.

\begin{lemma}[Chapter 1 of~\cite{Strang_CSE}]  \label{lem:svd_1L}
Let $\mb 1_L \in \R^{n \times n}$ be the lower triangular matrix with ones on and below the main diagonal. The singular value decomposition $\mb 1_L=U\Sigma V^T$ is such that for any $s = 0, \dots, n-1$, $U$ and $V$ have columns given by
\begin{equation*}
U(t, s) = \frac{1}{\sqrt{\frac{n}{2} + \frac{1}{4}}}\sin\Brac{\frac{(s+\frac{1}{2})(t+1)}{n+\frac{1}{2}} \pi}  \quad \text{for}\; t = 0, \dots, n-1
\end{equation*}
\begin{equation*}
V(t, s) = \frac{1}{\sqrt{\frac{n}{2} + \frac{1}{4}}}\cos\Brac{\frac{(s+\frac{1}{2})(t+\frac{1}{2})}{n+\frac{1}{2}} \pi} \quad \text{for}\; t = 0, \dots, n-1
\end{equation*}
respectively, and $\Sigma$ has diagonal entries given by
\begin{equation*}
\sigma_s = \brac{2 - 2 \cos \paren{\frac{s+\frac{1}{2}}{n+\frac{1}{2}} \pi}}^{-1/2}.
\end{equation*}
\end{lemma}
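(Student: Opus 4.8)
The plan is to verify directly that the proposed $U$, $V$, and $\Sigma = \diag(\sigma_0,\dots,\sigma_{n-1})$ form a valid SVD, i.e.\ that $U,V$ are orthogonal, $\Sigma$ is diagonal with nonnegative entries, and $\mb 1_L V = U\Sigma$; once $V$ is known to be orthogonal this yields $\mb 1_L = \mb 1_L V V^{T} = U\Sigma V^{T}$. Throughout I write $\theta_s = \frac{(s+1/2)\pi}{n+1/2} = \frac{(2s+1)\pi}{2n+1}$, and note that for $s = 0,\dots,n-1$ we have $\theta_s \in (0,\pi)$, so $\sin(\theta_s/2) > 0$ and $\sigma_s = [2-2\cos\theta_s]^{-1/2} = \frac{1}{2\sin(\theta_s/2)} > 0$ is well defined (using $2-2\cos\theta_s = 4\sin^2(\theta_s/2)$).

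The heart of the argument is two summation identities reflecting that $\mb 1_L$ acts by prefix summation and $\mb 1_L^{T}$ by suffix summation. For the first, $(\mb 1_L V)(t,s) = \sum_{\tau=0}^{t} V(\tau,s)$ is, up to the constant, a sum of the shifted cosines $\cos((\tau+\frac12)\theta_s)$; the telescoping relation $2\sin(\theta/2)\cos((\tau+\frac12)\theta) = \sin((\tau+1)\theta) - \sin(\tau\theta)$ collapses it to $\frac{\sin((t+1)\theta_s)}{2\sin(\theta_s/2)} = \sigma_s\sin((t+1)\theta_s)$, which is precisely $\sigma_s U(t,s)$. For the second, $(\mb 1_L^{T} U)(t,s) = \sum_{\tau=t}^{n-1} U(\tau,s)$ telescopes via $2\sin(\theta/2)\sin((\tau+1)\theta) = \cos((\tau+\frac12)\theta) - \cos((\tau+\frac32)\theta)$ to $\frac{\cos((t+\frac12)\theta_s) - \cos((n+\frac12)\theta_s)}{2\sin(\theta_s/2)}$; here the chosen denominator $n+\frac12$ forces $(n+\frac12)\theta_s = (s+\frac12)\pi$, an odd multiple of $\frac\pi2$, so the boundary term $\cos((n+\frac12)\theta_s)$ vanishes and we obtain exactly $\sigma_s V(t,s)$. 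Hence $\mb 1_L V = U\Sigma$ and $\mb 1_L^{T} U = V\Sigma$.

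These two identities immediately give $\mb 1_L^{T}\mb 1_L\, V = V\Sigma^{2}$ and $\mb 1_L\mb 1_L^{T} U = U\Sigma^{2}$, so the columns of $V$ (resp.\ $U$) are eigenvectors of the symmetric positive-definite matrix $\mb 1_L^{T}\mb 1_L$ (resp.\ $\mb 1_L\mb 1_L^{T}$) with eigenvalues $\sigma_s^{2}$. Since $\theta_s$ is strictly increasing on $(0,\pi)$, the values $\sigma_s^{2} = (2-2\cos\theta_s)^{-1}$ are pairwise distinct, so eigenvectors for distinct $s$ are automatically orthogonal and I need not compute any off-diagonal cross-term. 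It then remains only to verify the normalization $\norm{V(:,s)}{} = \norm{U(:,s)}{} = 1$, which reduces to $\sum_{t=0}^{n-1}\cos^{2}((t+\frac12)\theta_s) = \sum_{t=0}^{n-1}\sin^{2}((t+1)\theta_s) = \frac n2 + \frac14$. These follow from the closed forms $\sum_{t=0}^{n-1}\cos((2t+1)\theta_s) = \frac12$ and $\sum_{t=1}^{n}\cos(2t\theta_s) = -\frac12$, both consequences of the relation $2n\theta_s = (2s+1)\pi - \theta_s$ forced by $\theta_s = \frac{(2s+1)\pi}{2n+1}$. Since the prefactor satisfies $\frac{1}{n/2+1/4}$ times $\frac n2+\frac14$ equals $1$, the columns are unit vectors, $U$ and $V$ are orthogonal, and $\mb 1_L = U\Sigma V^{T}$ follows.

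The main obstacle, and the point where the slightly unusual parameters (the denominator $n+\frac12$ and the half-integer frequency shifts) earn their keep, is the boundary bookkeeping: the suffix sum collapses to a clean multiple of $V$ only because $\cos((n+\frac12)\theta_s) = 0$, and the normalization collapses to $\frac n2+\frac14$ only because $\sin(2n\theta_s) = \sin\theta_s$ and $\sum_{t=1}^n\cos(2t\theta_s) = -\frac12$. Any other denominator would leave residual boundary terms, simultaneously breaking the shift identities and the orthonormality. Everything else is routine trigonometric summation, and I would organize the write-up around the two shift identities plus the single normalization computation rather than verifying $U^{T}U = I$ and $V^{T}V = I$ entrywise.
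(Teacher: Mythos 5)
Your proposal is correct, and I verified the key steps: the telescoping identities $2\sin(\theta/2)\cos((\tau+\tfrac12)\theta)=\sin((\tau+1)\theta)-\sin(\tau\theta)$ and $2\sin(\theta/2)\sin((\tau+1)\theta)=\cos((\tau+\tfrac12)\theta)-\cos((\tau+\tfrac32)\theta)$ do collapse the prefix/suffix sums as you claim; the boundary term $\cos((n+\tfrac12)\theta_s)=\cos((s+\tfrac12)\pi)$ indeed vanishes; the Dirichlet-kernel facts $\sum_{t=0}^{n-1}\cos((2t+1)\theta_s)=\tfrac12$ and $\sum_{t=1}^{n}\cos(2t\theta_s)=-\tfrac12$ hold because $(2n+1)\theta_s=(2s+1)\pi$; and the distinct-eigenvalue argument for orthogonality is sound since $\mb 1_L^\top \mb 1_L$ is symmetric and $s\mapsto 2-2\cos\theta_s$ is strictly increasing on $s=0,\dots,n-1$. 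Note, however, that the paper gives no proof at all of this lemma: it is stated with a citation to Chapter 1 of Strang's \emph{Computational Science and Engineering}, where the result arises from the classical route that $\mb 1_L^{-1}$ is the backward first-difference matrix, so $(\mb 1_L\mb 1_L^\top)^{-1}$ is a tridiagonal second-difference matrix with mixed fixed/free boundary conditions whose sine/cosine eigenvector bases (DST/DCT-type) and eigenvalues $2-2\cos\theta_s$ are known in closed form; the singular values of $\mb 1_L$ are then the reciprocal square roots of those eigenvalues. Your direct verification runs this logic in reverse\textemdash guess the factors, check $\mb 1_L V=U\Sigma$ and $\mb 1_L^\top U=V\Sigma$, then get orthonormality\textemdash which buys a short, fully self-contained proof at the cost of making the trigonometric vectors look like they fell from the sky, whereas the difference-matrix route explains \emph{why} these particular frequencies and half-integer shifts appear (they encode the boundary conditions, exactly the point you flag at the end). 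Either way, your write-up would stand on its own as a legitimate replacement for the citation.
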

Since $M = [\mb 1_L \otimes \mb 1_L; I_{n^2}]$ and writing the SVD of $\mb 1_L$ as $\mb 1_L = U \Sigma V^\top$ per \cref{lem:svd_1L}, we have
\begin{align*}
    M^\dagger
    & = \brac{\paren{\mb 1_L^\top \otimes \mb 1_L^\top} \paren{\mb 1_L \otimes \mb 1_L} + I_{n^2}}^{-1}  \brac{\mb 1_L^\top \otimes \mb 1_L^\top,  I_{n^2}} \\
    & = \brac{\paren{V\Sigma U^\top \otimes V\Sigma U^\top} \paren{U\Sigma V^\top \otimes U\Sigma V^\top} + I_{n^2}}^{-1} \brac{V\Sigma U^\top \otimes V\Sigma U^\top,  I_{n^2}}\\
    & = \brac{\paren{V\Sigma^2 V^\top} \otimes \paren{V \Sigma^2 V^\top} + I_{n^2}}^{-1} \brac{V\Sigma U^\top \otimes V\Sigma U^\top,  I_{n^2}} \quad (\text{mixed product property\footnotemark})\\
    & = \brac{\paren{V \otimes V} \paren{\Sigma^2 \otimes \Sigma^2} \paren{V^\top \otimes V^\top} + I_{n^2}}^{-1} \brac{\paren{V \otimes V} \paren{\Sigma \otimes \Sigma} \paren{U^\top \otimes U^\top}, I_{n^2}}.
\end{align*}
\footnotetext{For any $A, B$ and $C, D$ of compatible dimensions, $\paren{A \otimes B} \paren{C \otimes D} = \paren{AC} \otimes \paren{BD}$. }
Now that $V \otimes V$ is an orthogonal matrix, it follows
\begin{align*}
\brac{\paren{V \otimes V} \paren{\Sigma^2 \otimes \Sigma^2} \paren{V^\top \otimes V^\top} + I_{n^2}}^{-1}
= \paren{V \otimes V} \paren{\Sigma^2 \otimes \Sigma^2 + I_{n^2}}^{-1} \paren{V^\top \otimes V^\top}.
\end{align*}
Thus,
\begin{align}
   M^\dagger = \brac{\paren{V \otimes V} \paren{\Sigma^2 \otimes \Sigma^2 + I_{n^2}}^{-1} \paren{\Sigma \otimes \Sigma} \paren{U^\top \otimes U^\top}, \paren{V \otimes V} \paren{\Sigma^2 \otimes \Sigma^2 + I_{n^2}}^{-1} \paren{V^\top \otimes V^\top}}.
\end{align}
By \cref{eqn:T-shortform},
\begin{align} \label{eq:TR_key1}
T_{R_{B, P}} &  = \frac{1}{m^2} \paren{V \otimes V} \left[ \paren{\Sigma^2 \otimes \Sigma^2 + I_{n^2}}^{-1} \paren{\Sigma \otimes \Sigma}  \paren{U^\top \otimes U^\top}  \paren{\mc P_2 F^* \otimes \mc P_1 F^*}\right. \nonumber \\
& \qquad \left. +  \paren{\Sigma^2 \otimes \Sigma^2 + I_{n^2}}^{-1} \paren{V^\top \otimes V^\top} \paren{\mc P_1 F^* \otimes \mc P_2 F^*} \right] \nonumber \\
& = \frac{1}{m^2} \paren{V \otimes V}
    \left[\paren{\Sigma^2 \otimes \Sigma^2 + I_{n^2}}^{-1} \paren{\Sigma \otimes \Sigma} \paren{U^\top \mc P_2 F^* \otimes U^\top \mc P_1 F^*} \right. \nonumber\\
& \qquad \left. +  \paren{\Sigma^2 \otimes \Sigma^2 + I_{n^2}}^{-1} \paren{V^\top \mc P_1 F^* \otimes V^\top \mc P_2 F^*} \right].
\end{align}
For the final computation, we make repeated use of the property $A = BCD \Longleftrightarrow \vect\paren{A} = \paren{D^\top \otimes B} \vect\paren{C}$ to obtain that
\begin{align*}
	\vect\paren{\wt X}
	& = T_{R_{B, P}} \vect\paren{\wt Y} \nonumber \\
	& = \frac{1}{m^2} \paren{V \otimes V} \underbrace{\paren{\Sigma^2 \otimes \Sigma^2 + I_{n^2}}^{-1}  \brac{\vect\paren{\Sigma U^\top \mc P_1 F^* \wt{Y} \ol{F} \mc P_2^\top U \Sigma} + \vect\paren{V^\top \mc P_2 F^* \wt Y \ol{F} \mc P_1^\top V}}}_{\doteq \vect\paren{Q}\; \text{for a}\; Q \in \R^{n^2 \times n^2}}\\
    & = \frac{1}{m^2} \vect\paren{V Q V^\top},
\end{align*}
which obviously can be computed in $O(n^3 + m^2 \log m)$ time, as against $O(n^6 + m^2 \log m)$.

\section{Analysis of the recovery error} \label{sec:err}
For any data $\wt Y$ following a known probability distribution, it follows from \cref{eqn:fund} that
\begin{align} \label{eqn:linear-exp}
\mathbb{E}\|\wt X - X\|_F^2
=\innerprod{T_R^* T_R}{\bb E \brac{\vect\paren{\wt Y}-\vect\paren{Y} } \brac{\vect\paren{\wt Y} -\vect\paren{Y} }^*},
\end{align}
where $\|\cdot\|_F^2$ denotes the Frobenius norm, and $\innerprod{\cdot}{\cdot}$ denotes the Frobenius inner product.

In CDI, measurements of photon flux at the detector are subject to quantum shot noise. This is due to intrinsic quantum fluctuations which cannot be removed in any measurement system. The resulting measurements follow the \text{Poisson shot noise} distribution given by
\begin{equation} \label{eqn:data}
 \wt Y \sim_{\mathrm{ind}} \frac{\norm{Y}{1}}{N_p}\mathrm{Pois}\Big{(}\frac{N_p}{\norm{Y}{1}}Y\Big{)},
\end{equation}
where $N_p$ is the expected (or nominal) number of photons reaching the detector, and $\norm{Y}{1}$ is understood as the $\ell_1$ norm of a vectorized version of $Y$ ~\cite{CDI-stats}.
As derived in \cite{HologPROptREF}, under this model
\begin{equation} \label{eqn:linear-exp-fund}
\mathbb{E}\|\wt X - X\|_F^2
= \frac{\norm{Y}{1}}{N_p} \innerprod{S_{R_{B,P}}}{Y},
\end{equation}
where
\begin{equation} \label{eq:scale-fact}
S_{R_{B,P}}=\mathrm{reshape}\big{(}\diag(T_R^*T_R),m,m\big{)},
\end{equation}
and $\mathrm{reshape}(\cdot,m,m)$ is the columnwise vector-to-matrix reshaping operator.

In the recovery error~\cref{eqn:linear-exp-fund}, both $Y$ and $S_{R_{B, P}}$ depend on the references in use. Empirically, for low-frequency dominant $X$ (e.g., CDI specimens shown in \cref{single-ref} and \cref{test-images}), the spectrum of $Y$ is similar to that of $X$, up to small variations in the magnitudes; see \cref{data-compar}.
\begin{figure}[!htbp] \label{data-compar}
    \centering
        \includegraphics[width=0.8\textwidth]{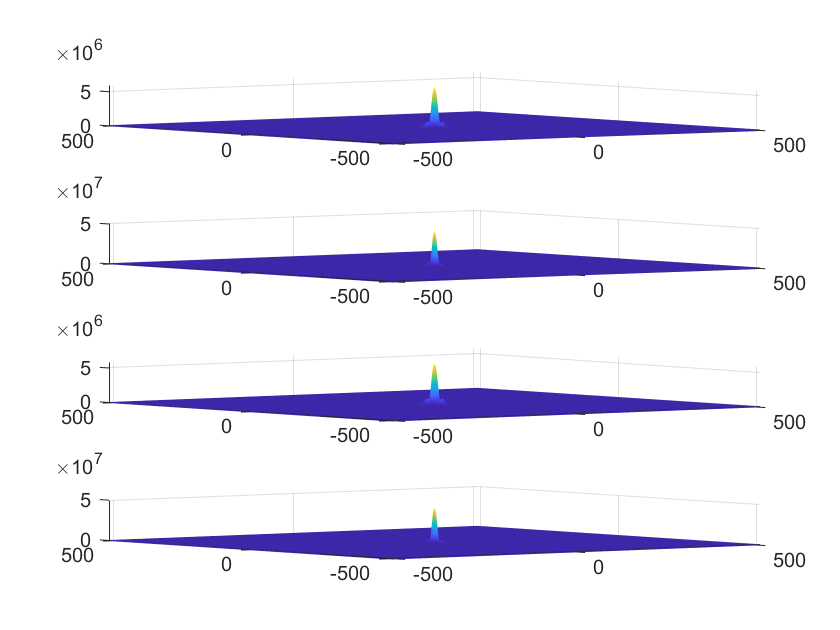}
        \caption{Top to bottom: squared magnitudes of the Fourier transform of the mimivirus~\cite{Mimivirus} itself, and that when the mimivirus is augmented with the block, pinhole, and dual-references, respectively (with $n=64$, and $m=1024$). These four spectra exhibit similar low-frequency dominance, and have entries of similar orders of magnitude.}
    \label{data-compar}
\end{figure}
This stability property of spectrum can be formally established for the single-reference setup $[X, R]$ (by expanding $|\wh{[X, R]}|^2$~\cite{HologPROptREF}), and likewise for our dual-reference setup. In contrast, the weighting factors in $S_R$ can vary by several orders of magnitude for different reference schemes, as shown in~\cref{ref-plot}.
\begin{figure}[!htbp] \label{ref-plot}
    \centering
    \begin{subfigure}[b]{0.3\textwidth}
        \includegraphics[width=\textwidth]{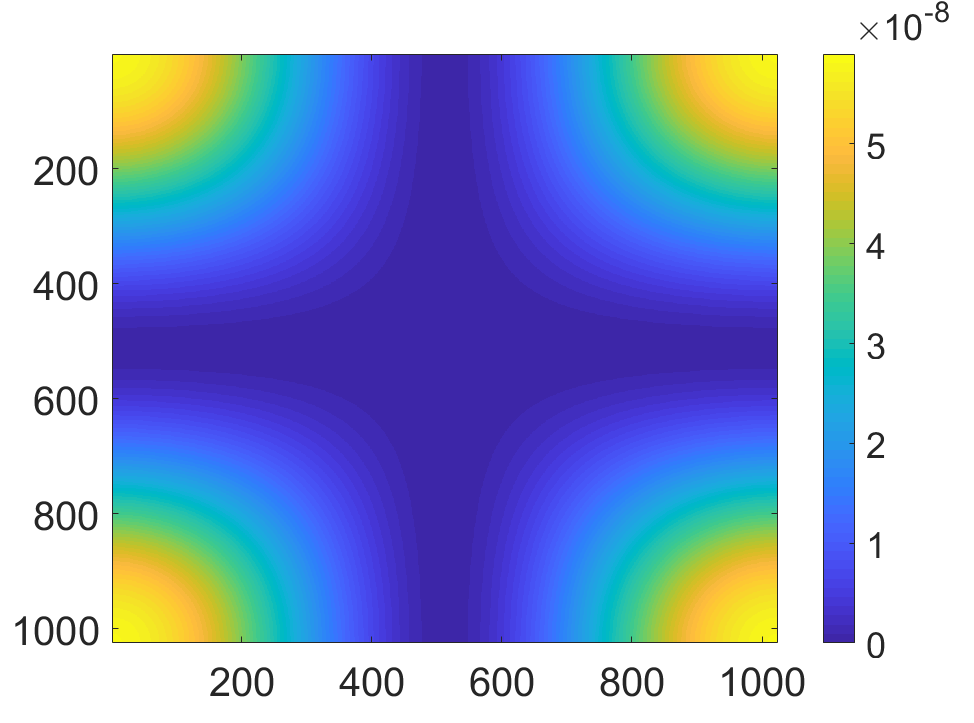}
        \caption{}
          \label{b-map}
    \end{subfigure}
    \begin{subfigure}[b]{0.3\textwidth}
        \includegraphics[width=\textwidth]{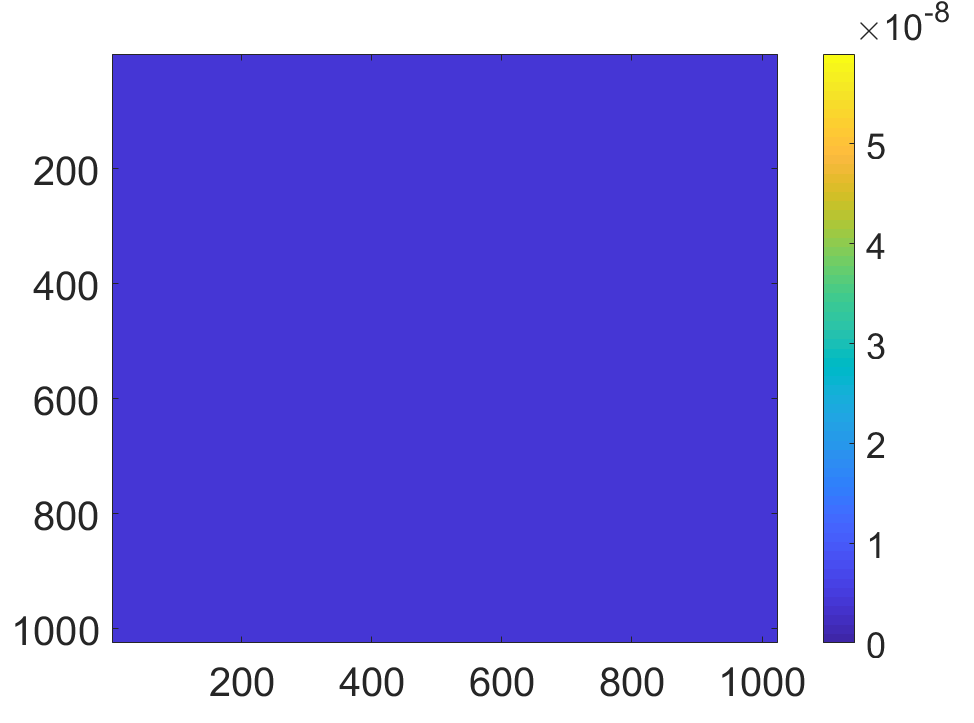}
        \caption{}
          \label{p-map}
    \end{subfigure}
    \begin{subfigure}[b]{0.3\textwidth}
        \includegraphics[width=\textwidth]{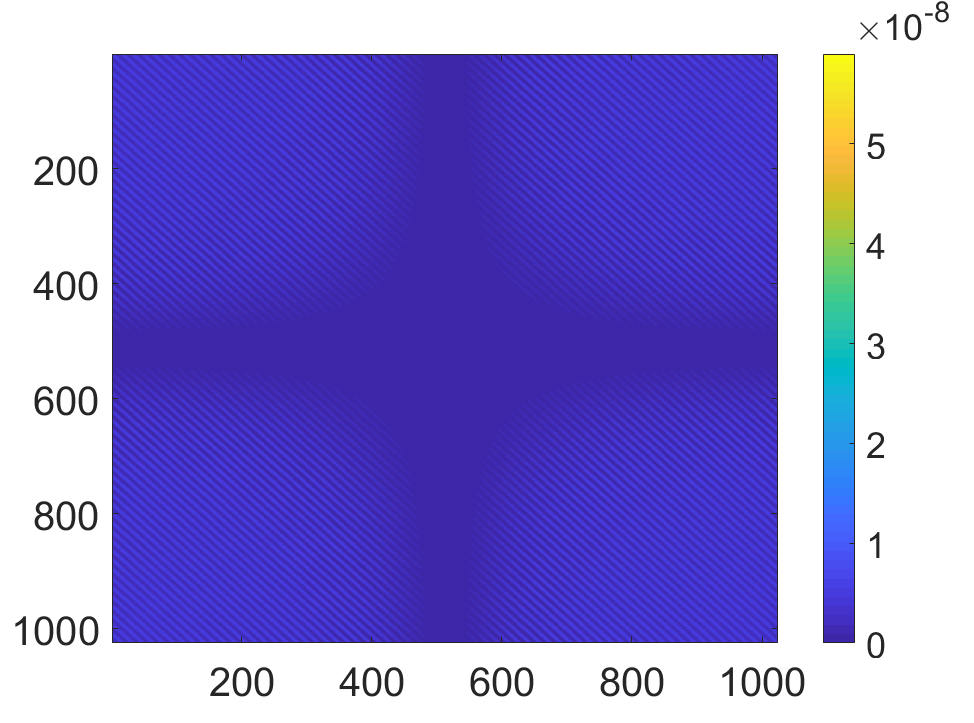}
        \caption{}
          \label{bp-map}
    \end{subfigure}
    \begin{subfigure}[b]{0.5\textwidth}
        \includegraphics[width=\textwidth]{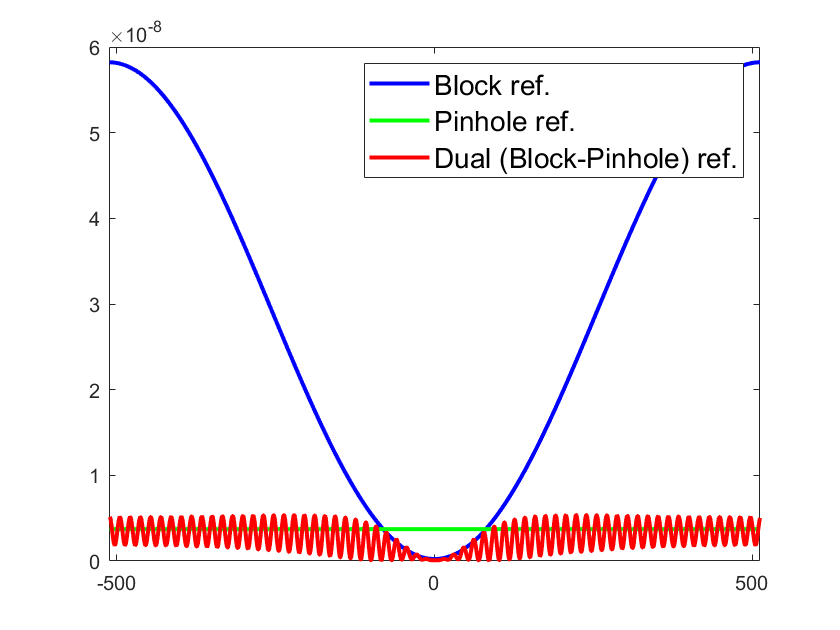}
        \caption{}
          \label{freq-scale-x}
    \end{subfigure}
    \caption{The top row shows colormap plots of the weighting factors $S_R$ for the block, pinhole, and dual references, respectively, when $n=64$ and $m=1024$. The bottom plot shows the three weighting factors along the (four identical) bordering cross-sections of the colormap plots.}
    \label{ref-plot}
\end{figure}
Hence, the influence of the reference scheme on the recovery error is largely determined by $S_R$.

We now derive an convenient analytical expression for $S_{R_{B,P}}$.
\begin{proposition}
For any $k_1,k_2 \in \{0,\dots, m-1\}$, $S_{R_{B,P}}(k_1,k_2)$ is equal to
\begin{align}
 \frac{1}{m^4} \sum_{r,s=0}^{n-1} \abs{\frac{\sigma_r \sigma_s}{\sigma_r^2 \sigma_s^2 + 1} u_r^\top \paren{\mc P_2 F^*}(:, k_1) u_s^\top \paren{\mc P_1 F^*}(:, k_2)   + \frac{1}{\sigma_r^2 \sigma_s^2 + 1} v_r^\top \paren{\mc P_1 F^*}(:, k_1) v_s^\top \paren{\mc P_2 F^*}(:, k_2) }^2.
\end{align}
\end{proposition}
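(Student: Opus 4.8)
The plan is to peel off the definition \cref{eq:scale-fact} one frequency at a time, using crucially that the outermost factor $V\otimes V$ in the closed form \cref{eq:TR_key1} is orthogonal. First I would write $T_{R_{B,P}} = \frac{1}{m^2}(V\otimes V)B$, where $B \in \bb C^{n^2\times m^2}$ denotes the bracketed matrix multiplying $(V\otimes V)$ in \cref{eq:TR_key1}. Because $V$ is orthogonal, so is $V\otimes V$, whence $T_{R_{B,P}}^* T_{R_{B,P}} = \frac{1}{m^4}B^*B$ and in particular $\diag(T_{R_{B,P}}^*T_{R_{B,P}}) = \frac{1}{m^4}\diag(B^*B)$. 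Since the $j$-th diagonal entry of $B^*B$ is exactly the squared Euclidean norm $\norm{B(:,j)}{}^2$ of the $j$-th column of $B$, and since $S_{R_{B,P}}$ is the columnwise reshaping of $\diag(T_{R_{B,P}}^*T_{R_{B,P}})$ into an $m\times m$ array, the task reduces to computing $\norm{B(:,j)}{}^2$ for the column index $j$ that the reshape associates with the frequency pair $(k_1,k_2)$.

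The second step is to read that column off the Kronecker structure of $B$. Each of the two summands of $B$ is a diagonal matrix times a Kronecker product of two $n\times m$ matrices, so its $j$-th column is itself a Kronecker product of one column from each factor, rescaled entrywise by the diagonal. Concretely, writing $u_r = U(:,r)$ and $v_r = V(:,r)$ for the singular vectors of \cref{lem:svd_1L}, the diagonal matrix $(\Sigma^2\otimes\Sigma^2+I_{n^2})^{-1}(\Sigma\otimes\Sigma)$ carries the $(r,s)$-indexed value $\sigma_r\sigma_s/(\sigma_r^2\sigma_s^2+1)$ and $(\Sigma^2\otimes\Sigma^2+I_{n^2})^{-1}$ carries $1/(\sigma_r^2\sigma_s^2+1)$, while the identities $(U^\top\mc P_\ell F^*)(r,k)=u_r^\top(\mc P_\ell F^*)(:,k)$ and the analogue for $V$ turn each factor-column entry into an inner product against a column of $\mc P_\ell F^*$. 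Collecting these, the $(r,s)$-th entry of $B(:,j)$ becomes
\begin{equation*}
\frac{\sigma_r\sigma_s}{\sigma_r^2\sigma_s^2+1}\,u_r^\top(\mc P_2 F^*)(:,k_1)\,u_s^\top(\mc P_1 F^*)(:,k_2) + \frac{1}{\sigma_r^2\sigma_s^2+1}\,v_r^\top(\mc P_1 F^*)(:,k_1)\,v_s^\top(\mc P_2 F^*)(:,k_2),
\end{equation*}
and summing the squared moduli of these entries over all $r,s \in \{0,\dots,n-1\}$ gives $\norm{B(:,j)}{}^2$, which after the $1/m^4$ prefactor is precisely the claimed expression.

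The main obstacle is not any hard estimate but the careful bookkeeping of three interacting index conventions: the columnwise reshape that turns the linear index $j$ into the pair $(k_1,k_2)$, the ordering within each Kronecker product that dictates which factor contributes $k_1$ versus $k_2$ (and $r$ versus $s$), and the block pattern of the diagonal matrices $\Sigma\otimes\Sigma$ and $\Sigma^2\otimes\Sigma^2+I_{n^2}$. These must be threaded consistently so that the two summands of $B$ are expanded on a common $(r,s)$ grid and add coherently inside a single modulus before squaring; a swap of $k_1$ and $k_2$ (equivalently of the two projections $\mc P_1,\mc P_2$) merely transposes $S_{R_{B,P}}$, and pinning down the reshape orientation is the one place where the conventions genuinely need to be fixed. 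Everything else is the orthogonal-invariance argument of the first step together with the mixed-product property already invoked to derive \cref{eq:TR_key1}.
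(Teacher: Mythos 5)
Your proposal is correct and follows essentially the same route as the paper's proof: both exploit the orthogonality of $V\otimes V$ (you via the Gram-matrix identity $T^*T = \frac{1}{m^4}B^*B$, the paper via orthogonal invariance of column norms) to reduce the problem to the squared column norms of the bracketed factor in \cref{eq:TR_key1}, and then expand that column's entries over the $(r,s)$ Kronecker grid exactly as in the paper, with the same column-index convention $k = mk_1+k_2$. The only difference is cosmetic bookkeeping of the $1/m^2$ prefactor.
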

\begin{proof}
From \cref{eq:scale-fact}, $S_{R_{B,P}}(k_1,k_2) = \norm{T_R'(:, mk_1+k_2)}{}^2$ and thus we are interested in the squared column norms of $T_R$. Since $V \otimes V$ is an orthogonal matrix and the Euclidean norm is orthogonally invariant, by \cref{eq:TR_key1}, it is sufficient to consider the squared column norms of
\begin{align*}
    T_R' = \frac{1}{m^2}
    \brac{\paren{\Sigma^2 \otimes \Sigma^2 + I_{n^2}}^{-1} \paren{\Sigma \otimes \Sigma} \paren{U^\top \mc P_2 F^* \otimes U^\top \mc P_1 F^*} +  \paren{\Sigma^2 \otimes \Sigma^2 + I_{n^2}}^{-1} \paren{V^\top \mc P_1 F^* \otimes V^\top \mc P_2 F^*} }.
\end{align*}
For any $\paren{k_1, k_2} \in \set{0, 1, \dots, m-1} \times \set{0, 1, \dots, m-1}$ and the corresponding $k = mk_1 + k_2$,
\begin{align*}
& \norm{T_R'(:, k)}{}^2 \nonumber \\
=\; & \frac{1}{m^4} \sum_{r,s=0}^{n-1} \abs{\frac{\sigma_r \sigma_s}{\sigma_r^2 \sigma_s^2 + 1} \brac{U^\top \mc P_2 F^*}(r, k_1) \brac{U^\top \mc P_1 F^*}(s, k_2)   + \frac{1}{\sigma_r^2 \sigma_s^2 + 1}  \brac{V^\top \mc P_1 F^*}(r, k_1) \brac{V^\top \mc P_2 F^*}(s, k_2) }^2 \\
 =\; & \frac{1}{m^4} \sum_{r,s=0}^{n-1} \abs{\frac{\sigma_r \sigma_s}{\sigma_r^2 \sigma_s^2 + 1} u_r^\top \paren{\mc P_2 F^*}(:, k_1) u_s^\top \paren{\mc P_1 F^*}(:, k_2)   + \frac{1}{\sigma_r^2 \sigma_s^2 + 1} v_r^\top \paren{\mc P_1 F^*}(:, k_1) v_s^\top \paren{\mc P_2 F^*}(:, k_2) }^2,
\end{align*}
as claimed.
\end{proof}

In~\cref{ref-plot}, we compare the $S_R$'s for the single-reference setup (either the pinhole or the block reference), with that of our dual-reference setup. For the single-reference setup $[X, R]$, \cite{HologPROptREF} showed that among three reference choices, the block and pinhole references perform best for low-frequency dominant $Y$ and flat-spectrum $Y$, respectively. Surprisingly, rhe simple idea of including the two references simultaneously and solving the resulting stacked linear system helps to combine the strengths. Indeed, as shown in \cref{ref-plot}, $S_{R_{B,P}}$ approximates the minimum of $S_{R_B}$ and $S_{R_P}$ uniformly over the entire frequency spectrum.

\section{Numerical simulations}  \label{sec:exp}
\begin{figure}[!htbp] \label{test-images}
    \centering
    \begin{subfigure}[b]{0.175\textwidth}
        \includegraphics[width=\textwidth]{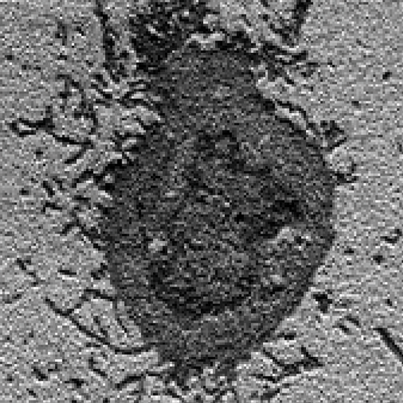}
        \caption{}
        \label{plos2}
    \end{subfigure}
    \begin{subfigure}[b]{0.175\textwidth}
        \includegraphics[width=\textwidth]{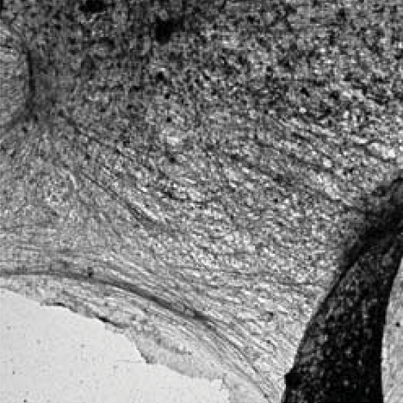}
        \caption{}
          \label{plos1}
    \end{subfigure}
        \begin{subfigure}[b]{0.175\textwidth}
        \includegraphics[width=\textwidth]{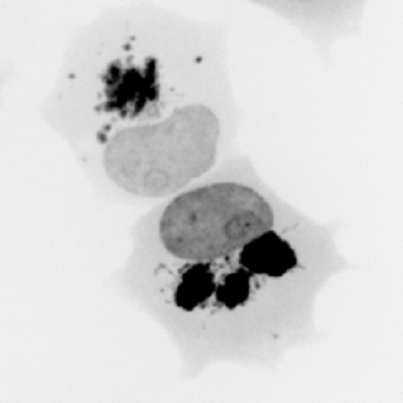}
        \caption{}
          \label{plos3}
    \end{subfigure}
        \begin{subfigure}[b]{0.175\textwidth}
        \includegraphics[width=\textwidth]{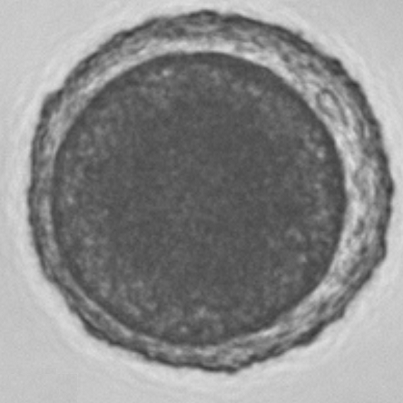}
        \caption{}
          \label{plos4}
    \end{subfigure}
        \begin{subfigure}[b]{0.175\textwidth}
        \includegraphics[width=\textwidth]{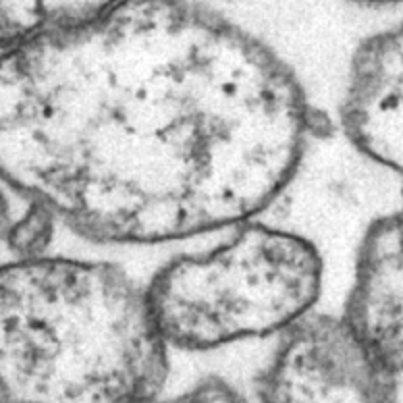}
        \caption{}
          \label{plos5}
    \end{subfigure}
        \begin{subfigure}[b]{0.175\textwidth}
        \includegraphics[width=\textwidth]{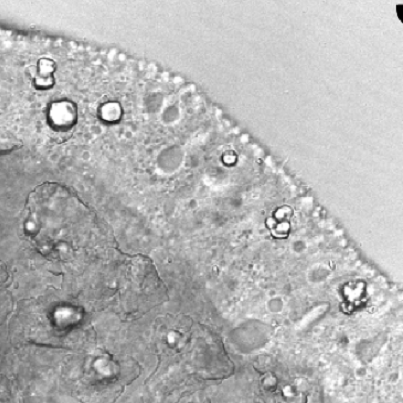}
        \caption{}
          \label{plos6}
    \end{subfigure}
        \begin{subfigure}[b]{0.175\textwidth}
        \includegraphics[width=\textwidth]{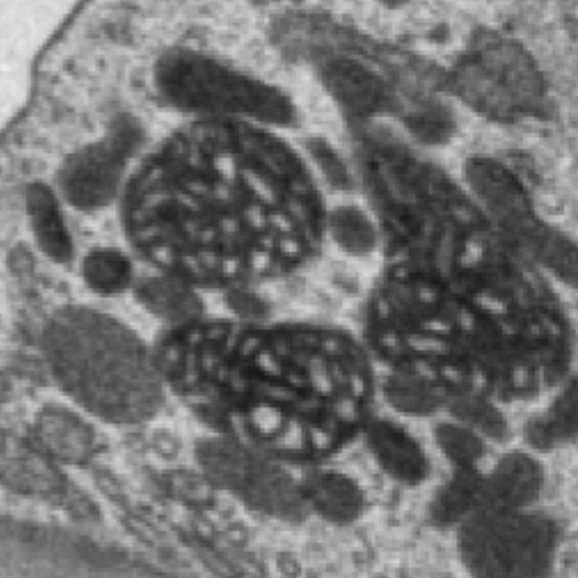}
        \caption{}
          \label{plos7}
    \end{subfigure}
            \begin{subfigure}[b]{0.175\textwidth}
        \includegraphics[width=\textwidth]{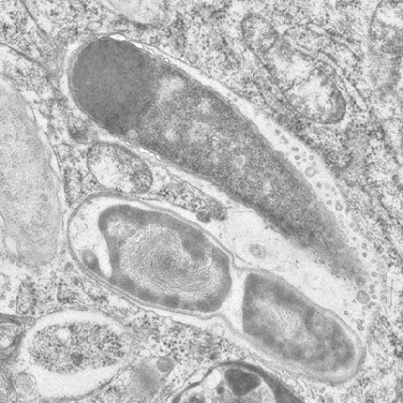}
        \caption{}
          \label{plos8}
    \end{subfigure}
                \begin{subfigure}[b]{0.175\textwidth}
        \includegraphics[width=\textwidth]{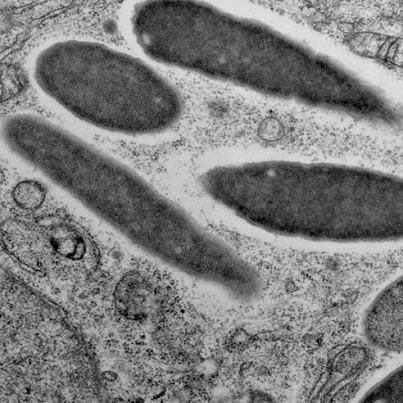}
        \caption{}
          \label{plos9}
    \end{subfigure}
    \caption{CDI specimen images used for numerical simulations in ~\cref{data-table}. Shown are specimens of influenza virus \cite{test-images}, stroma cells \cite{test-images}, mCherry proteins \cite{plos3}, 2-cell embryo \cite{plos4-5}, oocytes \cite{plos4-5}, S. pistillata \cite{plos6}, in-cellular aragonite crystal\cite{plos7}, salmonella WT\cite{plos8-9}, and sifA salmonella strain \cite{plos8-9}, respectively.}
    \label{test-images}
\end{figure}
\begin{table}[!htbp]
\centering
\caption{Empirical and expected values (shown in brackets) for the relative squared errors $\|\wt X-X\|_F^2/\norm{X}{F}^2$ using various recovery methods. The methods are referenced deconvolution with a block, pinhole, and dual-reference, as well as HIO with no reference (HIO(a)), with a block reference (HIO(b)), and with a pinhole reference (HIO(c)). The error values are the reported values rescaled by $10^{-4}$. Test images $X$ of size $64 \times 64$ pixels are the mimivirus shown in \cref{single-ref}, and the images shown in \cref{test-images}. Simulated photon flux data is of size $1024 \times 1024$, with $N_p=1000(1024^2)$ (i.e. 1000 photons per pixel).  }
\begin{tabular}{c|cccccc}
\textbf{Image} & \textbf{Block Ref.} & \textbf{Pinhole Ref.} & \textbf{Dual Ref.} & \textbf{HIO (a)} & \textbf{HIO (b)} & \textbf{HIO (c)}\\
\hline
\textbf{Mimivirus} & 3.70 (3.79) & 46.9 (63.8) & 1.51 (1.45) & 93.7 & 42.8 & 168.1 \\
\textbf{Influenza} & 18.7 (18.5) & 50.7 (31.4) & 4.64 (4.70) & 695.7 & 219.6 & 401.1 \\ \textbf{Stroma cells} & 9.19 (8.91) & 23.1 (44.1) & 2.78 (2.63) & 1607.0 & 110.8 & 2204.8\\
\textbf{mCherry proteins} & 1.84 (1.84) & 139.5 (131.5) & 0.927 (0.908) & 403.1 & 19.1 & 162.4\\
\textbf{Embryo} & 6.30 (6.29) & 54.2 (53.8) & 2.62 (2.71) & 642.4 & 140.9 & 672.5 \\
\textbf{Oocytes} & 7.01 (6.84) & 44.1 (78.3) & 2.66 (2.70) & 883.1 & 96.9 & 895.5\\
\textbf{S. pistallata} & 4.02 (3.93) & 148.8 (83.7) & 1.29 (1.31) & 335.6 & 51.2 & 87.4 \\ \textbf{Aragonite} & 11.6 (11.5) & 52.1 (34.6) & 4.41 (4.41) & 1767.3 & 181.7 & 971.1\\ \textbf{Salmonella WT} & 9.07 (8.81) & 44.3 (60.1) & 3.33 (2.97) & 708.0 & 189.8 & 389.0 \\ \textbf{sifA} & 7.27 (7.15) & 42.6 (54.6) & 2.82 (2.86) & 1765.6 & 221.2 & 678.7\\
\end{tabular}
\label{data-table}
\end{table}

For numerical evaluation of the proposed dual-reference scheme, we use images of $9$ different specimens from diverse sources: influenza virus \cite{test-images}, stroma cells \cite{test-images}, mCherry proteins \cite{plos3}, 2-cell embryo \cite{plos4-5}, oocytes \cite{plos4-5}, S. pistillata \cite{plos6}, in-cellular aragonite crystal\cite{plos7}, salmonella WT\cite{plos8-9}, and sifA salmonella strain \cite{plos8-9}.

To set up, each image is resized to $64 \times 64$, and the pixel values are normalized to $[0, 1]$. The oversampled Fourier transform is taken to be of size $1024 \times 1024$, and the collected noisy data $\widehat{Y}$ obeys the Poisson shot noise model defined in~\cref{eqn:data}. The nominal number of photons is set as $N_p = 1000 \times 1024^2$.

We compare the proposed dual-reference design with the single-reference design studied in~\cite{HologPROptREF}. Specifically, we consider the configuration $[X, R]$ with $R$ being either the pinhole or the block reference. The images are retrieved by the referenced deconvolution algorithm. We also report the performance of the classic hybrid input output (HIO) algorithm on these images, with or without an augmented reference. Performance is measured by the relative recovery error, defined as
\begin{align}
\eps \doteq \frac{\|X - \widehat{X}\|^2}{\norm{X}{}^2}.
\end{align}

The detailed recovery errors for the various schemes are tabulated in \cref{data-table}. It is evident that the dual-reference scheme performs consistently better than other schemes.

\section{Conclusions}
We have proposed a novel dual-reference scheme for holographic CDI, together with a recovery algorithm which provides exact recovery in the noiseless setting. For data corrupted by Poisson shot noise, the dual-reference combines the best features of the block and pinhole references. Numerical experiments on simulated CDI data show the dual-reference scheme provides a smaller recovery error than the leading (single) reference schemes.

\subsection*{Acknowledgments}
The authors are very grateful to Walter Murray, Gordon Wetzstein for many guiding discussions throughout this research.

\appendices

\section{Connection between the continuous and discrete settings}  \label{sec:c2d_app}

The computational imaging literature often only describe the continuous formulation of CDI imaging, without touching on the intrinsic discretization issue due to the finite-grid photon detector and computational feasibility. For the sake of completeness, here we clarify the relevant issues. An exposition similar to \cref{sec:app_sampling} can also be found in Sec. 5.6 of~\cite{GuizarSicairos2010Methods}.

\subsection{Sampling at the detector}  \label{sec:app_sampling}

In CDI setup, suppose the detector is sufficiently far away from the specimen of interest so that the well known \emph{Fraunhofer approximation} applies. Denote the (electromagnetic) field transmitted through the specimen as $I\paren{x, y} \doteq f(x, y, 0)$. Then, the field intensity at the far-field (i.e., $z \gg 0$) detector can be approximated as
\begin{align*}
\abs{f(x,y,z)}^2 \approx \frac{1}{\lambda^2 z^2} |\wh{I}(\frac{x}{\lambda z}, \frac{y}{\lambda z})|^2,
\end{align*}
where $\lambda$ is the wavelength of the electromagnetic radiation and $\wh{I}$ is the continuous 2D Fourier transform of $I$:
\begin{align*}
\wh{I}(u,v)= \int_{\R^2}I(x,y)e^{-i 2\pi (ux+vy)}\; dxdy.
\end{align*}
The detector consists of an array of $\Delta \times \Delta$ square pixel areas, each of which counts the incident photons. Let $p, q$ index the 2D array of pixels. The radiation energy received at the $\paren{p, q}$-th pixel area ($p, q \in \Z$) is given by
\begin{align*}
\frac{\tau \mu}{\lambda^2 z^2}\int_{\paren{x,y} \in [-\Delta/2,\Delta/2]^2}\abs{\wh{I}\paren{\frac{p\Delta+x}{\lambda z},\frac{q\Delta+y}{\lambda z}}}^2\; dxdy,
\end{align*}
where $\tau$ is the collection duration and $\mu$ is the detector quantum efficiency. When $\Delta$ is sufficiently small, the above integral can be well approximated by
\begin{align}  \label{eq:detector_photon_mean}
\frac{\Delta^2 \tau \mu}{\lambda^2 z^2}\abs{\wh{I}\paren{\frac{p\Delta}{\lambda z},\frac{q\Delta}{\lambda z}}}^2,
\end{align}
which can be treated as the mean number of photons measured by the $\paren{p, q}$ pixel location in the collection duration $\tau$.

\subsection{Continuous to discrete Fourier analysis}  \label{sec:app_c2d_fourier}

So measurements of CDI are effectively samples of the spectrum $|\wh{I}\,|^2$ at discrete locations $\paren{p\Delta', q\Delta'}$ for $p, q \in \Z$, where $\Delta' \doteq \Delta/\paren{\lambda z}$. We will write these samples collectively as $Y$ where $Y[p, q] = |\wh I\paren{p\Delta', q\Delta'}|^2$. In practice, phase retrieval is the problem of recovering the complex phases of $Y$, from $Y$ and additional knowledge about $I$.

Recovering these spectrum samples is crucial, as they can be used for recovering $I$ itself. In practical CDI, $I$ is always compactly supported. Without loss of generality, we assume the support is $[-B/2, B/2] \times [-B/2, B/2]$. Since $\wh{I}$ and $I$ are related by 2D continuous Fourier transform, 2D sampling theorem (see, e.g., Chapter 2 of~\cite{Goodman2005Introduction}) implies that whenever
\begin{align}
\frac{\Delta}{\lambda z} \le \frac{1}{B} \Longleftrightarrow \Delta \le \frac{\lambda z}{B},
\end{align}
$I$ can be recovered from samples $\wh I\paren{p\Delta', q\Delta'}$ for all $p, q \in \Z$ via interpolation.

Performing continuous-time Fourier transform on $\wh I\paren{u,v} \sum_{p, q \in \Z}\delta\paren{u - p \Delta', v - q \Delta'}$ reduces to performing discrete-time Fourier transform (DTFT) on $\wh I[p, q]$, where $\delta$ denotes the delta function, either 1D or 2D depending on the context. Moreover, the samples of $\mathrm{DTFT}(\wh I[p, q])$, i.e., samples of $I$, can be computed via discrete Fourier transform. This serves as a high-level justification of our discrete formulation, and is also well established in the digital signal processing literature.

Now we work out the quantitative details. The continuous-time (inverse) Fourier transform on the sampled $\wh I\paren{u,v} \sum_{p, q \in \Z}\delta\paren{u - p \Delta', v - q \Delta'}$ is
\begin{align*}
 J\paren{x, y}
 & = \int_{\R^2} \wh I\paren{u, v} \sum_{p, q \in \Z} \delta\paren{u - p \Delta', v - q \Delta'} e^{i 2\pi \paren{ux + vy}} \; du dv \\
 & = \int_{\R^2} \sum_{p, q \in \Z} \wh I\paren{p \Delta', q \Delta'} \delta\paren{u - p \Delta', v - q \Delta'}e^{i 2\pi \paren{ux + vy}} \; du dv \\
 & = \sum_{p, q \in \Z} \wh I\paren{p \Delta', q \Delta'} \int_{\R^2} \delta\paren{u - p \Delta', v - q \Delta'}e^{i 2\pi \paren{ux + vy}} \; du dv \\
 & = \sum_{p, q \in \Z} \wh I\paren{p \Delta', q \Delta'} e^{i 2\pi \paren{p \Delta' x + q \Delta' y}},
\end{align*}
where at the last step we used the sifting property of the $\delta$ function. We recognize that the last equation represents the (exponential-conjugate) 2D discrete-time Fourier transform on the discrete sequence $\wh I[p, q]$, with the frequency axes rescaled by a factor $2\pi \Delta'$.

Practical detectors have only finite sizes. So we do not have access to the 2D discrete sequence $\wh I[p, q]$, but a truncated version $\wh I_{T}[p, q]$ so that:
\begin{align*}
\wh I_{T}[p, q] =
\begin{cases}
\wh I[p, q]  &   \abs{p} \le D \; \text{and} \;  \abs{q} \le D \\
0     &   \text{otherwise}
\end{cases},
\end{align*}
where we assume the detector consists of $(2D+1)\times (2D+1)$ pixels. Thus, at best we can compute an approximation $\wt J$ to $J$:
\begin{align}
 \wt J \paren{x, y} = \sum_{p, q \in \Z:\; \abs{p} \le D, \abs{q} \le D} \wh I\paren{p \Delta', q \Delta'} e^{i 2\pi \paren{p \Delta' x + q \Delta' y}}.
\end{align}
When the spectrum $\wh I$ decays sufficiently fast\textemdash which is often true in practice\textemdash and $2D+1$ (size of the detector) is relatively large, $\|J - \wh J\|_{L_1}$ tends to be small. Taking the discrete Fourier transform on $\wh I_{T}[p, q]$ (for convenience, we take a less standard convention)
\begin{align}
\sum_{p = -D}^{D} \sum_{q = -D}^D \wh I_{T}[p, q] e^{i2\pi \paren{p n_1 + qn_2}/\paren{2D + 1}}
\end{align}
obviously computes $\wt J$ at locations $\paren{\frac{n_1}{\paren{2D+1} \Delta'}, \frac{n_2}{\paren{2D+1} \Delta'}}$ for all $n_1, n_2 \in \Z$ with $\abs{n_1} \le D, \abs{n_2} \le D$.

To sum up, when the sampling interval on the detector is sufficiently small ($\Delta \le \lambda z/B$) and the detector contains sufficiently large number of pixels (i.e., $D$ large), the discrete formulation for CDI is a reasonable proxy to the continuous formulation. Particularly, recovering the phases of the frequency samples helps to approximate the original field $I$ at discrete sampled points.

\subsection{Continuous vs. discrete signal recovery}

In this subsection, we sketch the correspondence between the discrete signal recovered via the recovery algorithm of this paper versus the continuous specimen signal. Firstly, we show that the discrete autocorrelation obtained is (approximately) a sampled version of the continuous autocorrelation. Secondly, we discuss how the discrete signal $X$ recovered from the discrete autocorrelation approximates the continuous specimen signal $I$.

Let $A_I$ denote the autocorrelation of $I$, i.e.,
\begin{align*}
A_I(s,t)&=\int_{\R^2}I(x,y)\overline{I(x-s,y-t)}\; dxdy.
\end{align*}
Thanks to the convolution theorem of Fourier transform,
\begin{align*}
|\wh{I}(u,v)|^2 &= \int_{\R^2}A_I(x,y)e^{-i 2\pi (ux+vy)}\; dxdy,
\end{align*}
i.e., squared Fourier magnitudes of $I$ are the Fourier transform of $A_I$.

As discussed above, sampling occurs at the detector. On one hand, due to the convolution theorem of Fourier transform,
\begin{align*}
& \int_{\R^2} |\wh I\paren{u, v}|^2 \sum_{p, q \in \Z} \delta\paren{u - p\Delta', v - q \Delta'} e^{i2\pi \paren{ux + vy}} \; du dv \\
=\; &  \int_{\R^2} |\wh I\paren{u, v}|^2 e^{i2\pi \paren{ux + vy}} \; du dv \; \ast \\
& \qquad \; \int_{\R^2}\sum_{p, q \in \Z} \delta\paren{u - p\Delta', v - q \Delta'} e^{i2\pi \paren{ux + vy}} \; du dv  \\
=\; & A_I\paren{x, y} \; \ast \; \frac{1}{\Delta'^2} \sum_{k, \ell \in \Z} \delta\paren{x - \frac{k}{\Delta'}, y - \frac{\ell}{\Delta'}}.
\end{align*}
On the other hand, similar to the argument in \cref{sec:app_c2d_fourier},
\begin{align*}
\int_{\R^2} |\wh I\paren{u, v}|^2 \sum_{p, q \in \Z} \delta\paren{u - p\Delta', v - q \Delta'} e^{i2\pi \paren{ux + vy}} \; du dv
= \sum_{p, q \in \Z} \abs{\wh I\paren{p \Delta', q\Delta'}}^2 e^{i 2\pi \paren{p\Delta' x + q \Delta' y}}.
\end{align*}
Thus,
\begin{align}
A_I\paren{x, y} \; \ast \; \frac{1}{\Delta'^2} \sum_{k, \ell \in \Z} \delta\paren{x - \frac{k}{\Delta'}, y - \frac{\ell}{\Delta'}}
= \sum_{p, q \in \Z} \abs{\wh I\paren{p \Delta', q\Delta'}}^2 e^{i 2\pi \paren{p\Delta' x + q \Delta' y}}.
\end{align}
For sanity check, both sides are spatially periodic with period $\paren{1/\Delta', 1/\Delta'}$. Now since $I$ is supported on $[-B/2, B/2] \times [-B/2, B/2]$, $A_I$ is supported on $[-B, B] \times [-B, B]$. So whenever
\begin{align}
 \frac{1}{\Delta'} \ge 2B \Longleftrightarrow \frac{\Delta}{\lambda z} \le \frac{1}{2B},
\end{align}
there is no aliasing and we can focus on the region $[-B, B] \times [-B, B]$ which contains a scaled version of $A_I$.

To account for the truncation effect, we again assume $|\wh I\paren{p\Delta', q\Delta'}|^2$ gets truncated whenever $\abs{p} > D$ or $\abs{q} > D$ for a certain $D \in \N$. Then, taking the discrete Fourier transform on the truncated sequence, i.e.,
\begin{align}
\sum_{p = -D}^{D} \sum_{q = -D}^D \abs{\wh I\paren{p \Delta', q\Delta'}}^2 e^{i2\pi \paren{pn_1 + qn_2}/\paren{2D +1}}
\end{align}
approximates the values of $A_I$ at locations $\paren{\frac{n_1}{\paren{2D+1}\Delta'}, \frac{n_2}{\paren{2D+1}\Delta'}}$ for all $n_1, n_2 \in \Z$ with $\abs{n_1} \le D$ and $\abs{n_2} \le D$.

Having obtained an (approximate) sampled version of the continuous autocorrelation, the relationship between the corresponding discrete signal $X$ and the continuous specimen signal exactly corresponds to the midpoint rule approximation of an integral (e.g. see \cite{Stoer2002}). Specifically, it was shown in \cite{HologPROptREF} that the relationship between $X$ and $A_{[X,R]}$ that governs signal recovery is given by:
\begin{equation} \label{eq:cross_corr_XR}
A_{[X, R]} \paren{s_1, -n+s_2} = \sum \limits_{t_1=0}^{n-1} \sum \limits_{t_2 = 0}^{n-1} X\paren{t_1, t_2} \ol{R\paren{t_1-s_1, t_2 - s_2}},
\end{equation}
for $s_1, s_2 \in \set{-(n-1), \dots, 0}$ (see Eq. 2.10 of \cite{HologPROptREF} for this exact setup). Now consider the continuous counterpart (with analogous notational conventions), i.e. a continuous specimen and reference composite given by $[X,R] \in [0,B] \times [0,2B]$. The continuous partial autocorrelation can analogously be derived using the procedure in \cite{HologPROptREF}, and is given by:
\begin{equation} \label{eq:cross_corr_XR_cts}
A_{[X, R]} \paren{s_1, -B+s_2} = \int \limits_{t_1=0}^{B} \int \limits_{t_2 = 0}^{B} X\paren{t_1, t_2} \ol{R\paren{t_1-s_1, t_2 - s_2}},
\end{equation}
for $s_1, s_2 \in [-B,0].$
The points of the discrete $X$ in \cref{eq:cross_corr_XR} serve to approximate the continuous expression in \cref{eq:cross_corr_XR_cts} via the midpoint rule approximation for integration (up to a scaling factor). Thus, in the limit that the number of points on $X$ in the summation \cref{eq:cross_corr_XR} approaches a continuum, the discrete and continous counterparts of $X$ exactly coincide.


\bibliographystyle{IEEEtran}
\bibliography{dual-bib}

\end{document}